\journal{}
\newtheorem{theorem}{Theorem}[section]
\newtheorem{lemma}[theorem]{Lemma}
\newtheorem{define}[theorem]{Definition}
\def\k{{\rm K}}
\def\M{{\bf M}}
\def\lpp{{\rm lpp}}
\def\lc{{\rm lc}}
\def\f{{\bf f}}
\def\e{{\bf e}}
\def\u{{\bf u}}
\def\v{{\bf v}}
\def\w{{\bf w}}
\def\bru{\bar{\u}}
\def\brf{\bar{f}}
\def\lcm{{\rm lcm}}
\def\deg{{\rm deg}}
\def\max{{\rm max}}
\def\x{{x_1,\cdots,x_n}}
\def\lif{{\bf if \,}}
\def\lthen{{\bf then \,}}
\def\lendif{{\bf end if\,}}
\def\lwhile{{\bf while \,}}
\def\lendwhile{{\bf end while\,}}
\def\ldo{{\bf do \,}}
\def\lend{{\bf end \,}}
\def\lreturn{{\bf return \,}}
\def\lla{{\longleftarrow}}
\def\lbegin{{\bf begin}}
\newcommand{\SPC}{\hspace*{7pt}}
\newcommand{\comment}[1]{}
\newcommand{\ignore}[1]{}
\newcommand{\gr}{Gr\"obner\,}
\begin{document}

\begin{frontmatter}

%% Title, authors and addresses

%% use the tnoteref command within \title for footnotes;
%% use the tnotetext command for the associated footnote;
%% use the fnref command within \author or \address for footnotes;
%% use the fntext command for the associated footnote;
%% use the corref command within \author for corresponding author footnotes;
%% use the cortext command for the associated footnote;
%% use the ead command for the email address,
%% and the form \ead[url] for the home page:
%%
%% \title{Title\tnoteref{label1}}
%% \tnotetext[label1]{}
%% \author{Name\corref{cor1}\fnref{label2}}
%% \ead{email address}
%% \ead[url]{home page}
%% \fntext[label2]{}
%% \cortext[cor1]{}
%% \address{Address\fnref{label3}}
%% \fntext[label3]{}

\title{A Generalized Criterion for  Signature Related \gr Basis Algorithms\tnoteref{label0}}
\tnotetext[label0]{Version 1.2.}

%% use optional labels to link authors explicitly to addresses:
%% \author[label1,label2]{<author name>}
%% \address[label1]{<address>}
%% \address[label2]{<address>}

\author{Yao Sun and Dingkang Wang\fnref{label1}}

\fntext[label1]{The authors are supported by NSFC 10971217 and 60821002/F02.}

\address{Key Laboratory of Mathematics Mechanization, Academy of Mathematics and Systems Science, CAS, Beijing 100190,  China}

\ead{sunyao@amss.ac.cn, dwang@mmrc.iss.ac.cn}

\begin{abstract}
%% Text of abstract
A generalized criterion for signature related algorithms  to compute \gr basis is proposed  in this paper. Signature related algorithms are a popular kind of algorithms for computing \gr basis, including the famous F5 algorithm, the extended F5 algorithm and the GVW algorithm. The main purpose of current paper is to study in theory what kind of criteria is correct in   signature related algorithms and provide a generalized method to develop new criteria. For this purpose, a generalized criterion is proposed. The generalized criterion only relies on a general partial order defined on a set of  polynomials. When specializing the partial order to appropriate specific orders, the generalized criterion can specialize to almost all existing criteria of signature related algorithms. For {\em admissible} partial orders, a complete proof for the correctness of the algorithm based on this generalized criterion is also presented. This proof has no extra requirements on the computing order of critical pairs, and is also valid for non-homogeneous polynomial systems. More importantly, the partial orders implied by existing criteria are admissible. Besides, one can also check whether a new criterion is correct in signature related algorithms or even develop new criteria by using other admissible partial orders in the generalized criterion.
\end{abstract}

\begin{keyword}
%% keywords here, in the form: keyword \sep keyword
\gr basis, F5, signature related algorithm, generalized criterion.
%% MSC codes here, in the form: \MSC code \sep code
%% or \MSC[2008] code \sep code (2000 is the default)

\end{keyword}

\end{frontmatter}

% \linenumbers

%% main text

\section{Introduction}

\gr basis was first proposed by Buchberger in 1965 \citep{Buchberger65}. Since then, many important improvements have been made to speed up the algorithm for computing \gr basis \citep{Buchberger79,  Lazard83, Buchberger85, GebMol86, Gio91, Mora92, Fau99, Fau02}. Up to now,  F5  is one of the most efficient algorithms for computing \gr basis.
%F5 has a pretty good performance for attacking the famous HFE 80 problem.
The concept of signatures for polynomials was also introduced   by Faug\`ere in \citep{Fau02}. Since F5 was proposed in 2002, it has been widely investigated and several   variants of F5  have   been presented, including the F5C algorithm \citep{Eder09} and F5 with extended criteria \citep{Ars09}.  Gao et al proposed an incremental algorithm G2V to compute \gr basis in \citep{Gao09}, and presented an extended version GVW in \citep{Gao10b}.

The common characteristics of all the above algorithms  are (1)  each polynomial has been assigned a  {\em signature}, and (2) both the criteria and the  reduction process  depend on the signatures of polynomials.  The only difference among the algorithms is that their criteria are different.

By studying  the criteria carefully, we find that all of these criteria work almost in a same way.
Suppose $f$ and $g$ are polynomials with signatures and the S-pair of $f$ and $g$ is denoted by $(t_f, f, t_g, g)$ where $t_f$ and $t_g$ are power products such that the leading power product of $t_f f$ and $t_g g$ are the same. Then a necessary condition of existing criteria to reject this S-pair is that, there exists some known polynomial $h$ such that $h$'s signature is a factor of $t_f f$'s
or $t_g g$'s signature. However, this condition is not sufficient to make the criteria correct.
Thus, existing criteria use different extra conditions to ensure correctness. With this insight,
we generalize  these extra conditions to a partial order defined on a set of polynomials, and
then propose a generalized criterion for signature related algorithms.
Therefore, when specializing the partial order to appropriate specific orders,
the generalized criterion can specialize to almost all existing criteria of signature
related algorithms. We emphasize that the generalized criterion can not only specialize
to a {\em single} criterion, but also can specialize to {\em several} criteria at the
same time. We will discuss the specializations in detail.

%in subsection \ref{subsec_specializations}.

Unfortunately, not all general partial orders can make the generalized criterion correct. We proved   that the generalized criterion is correct if the partial order  is {\em admissible}. Unlike other proofs for the correctness of signature related algorithms \citep{Fau02, Stegers05, Eder08, Eder09, Ars09, Gao10b}, the proof in this paper is  complete. The proof does not need   extra requirements on  the computing order of critical pairs.  The proof is also not limited   to homogeneous polynomial systems. At present, most proofs for signature related algorithms always assume the input polynomial system is homogeneous or the critical pair with the smallest signature is computed first.
However, in practical implementation, these extra requirements usually make the algorithm less efficient. Moveover, we show  that the partial orders implied by the criteria of F5 and GVW are both admissible, so the proof in this paper is also valid for the correctness of F5 and GVW.  A complete proof for the correctness of F5 is also given in \citep{SunWang10a, SunWang10b}.

The significance of the generalized criterion is to show what kind of criteria for signature related algorithms is correct and provide a generalized method to check or develop new criteria. Specifically, when a new criterion is presented, if it can be specified from the generalized criterion by using an admissible partial order, then this new criterion is definitely correct. It is also possible for us to develop some new criteria by using an admissible partial order in the generalized criterion. From  the proof in this paper, we know that any admissible partial order can develop a new criterion for signature related algorithms in theory, but not all of these criteria are really efficient. Therefore, we claim that if the admissible partial order is in fact a total order, then almost all useless computations can be avoided. The proof for the claim will be included in our future works.

The paper is organized as follows. Section 2 gives the generalized criterion and describes how this generalized criterion specializes to the criteria of F5 and GVW. Section 3 proves the correctness of the generalized criterion. Section 4 discusses a new criterion by using an admissible partial order, and conducts some comparisons. Concluding remarks follow in Section 6.

\section{Generalized Criterion} \label{sec_criterion}

\subsection{Generalized criterion}

Let $R=\k[\x]$ be a polynomial  ring over a field $\k$ with $n$ variables. Suppose $\{f_1, \cdots, f_m\}$  is a finite subset of  $ R$.  We want to compute a \gr basis for the ideal
$$I=\langle f_1, \cdots, f_m\rangle=\{p_1f_1+\cdots+p_mf_m \mid p_1,\cdots,p_m \in R\} $$
with respect to some term order on $R$.

%Define $$\H=\{(p1, \cdots, p_m)\in R^m \mid p_1f_1+\cdots+p_mf_m=0\}$$ to be the {\bf syzygy module} of $\f=(f_1, \cdots, f_m)$.

Let $\f=(f_1, \cdots, f_m) \in R^m$, and consider the following $R$-module of $R^m\times R$: $$\M=\{(\u, f) \in R^m\times R \mid \u\cdot\f=f\}.$$ Let $\e_i$ be the $i$-th unit vector of $R^m$, i.e. $(\e_i)_j=\sigma_{ij}$. Then the $R$-module $\M$ is generated by $\{(\e_1, f_1), \cdots, (\e_m, f_m)\}.$

Fix {\em any} term order $\prec_1$ on $R$ and {\em any} term $\prec_2$ on $R^m$. We must emphasize that the order $\prec_2$ may or may not be related to $\prec_1$ in theory, although $\prec_2$ is usually an extension of $\prec_1$ to $R^m$ in implementation. For sake of convenience, we shall use the following convention for leading power products: $$\lpp(f)=\lpp_{\prec_1}(f) \mbox{ and } \lpp(\u)=\lpp_{\prec_2}(\u),$$ for any $f\in R$ and $\u\in R^m$. We make the convention that if $f=0$ then $\lpp(f)=0$ and $\lpp(f) \prec_1 t$ for any non-zero power product $t$ in $R$; similarly for $\lpp(\u)$. In the following, we use $\prec$ to represent $\prec_1$ and $\prec_2$, if no confusion occurs.

For any $(\u, f)\in \M$, we call $\lpp(\u)$ the {\bf signature} of $(\u, f)$, which is the same as the signature used in F5.

Given a finite set $B\subset \M$, consider a {\bf partial order} ``$\leq$" defined on $B$, where ``$\leq$" has:
\begin{enumerate}

\item Reflexivity: $(\u, f)\leq (\u, f)$ for all $(\u, f)\in B$.

\item Antisymmetry: $(\u, f)\leq (\v, g)$ and $(\v, g) \leq (\u, f)$ imply $(\u, f)=(\v, g)$, where $(\u, f), (\v, g)\in B$.

\item Transitivity: $(\u, f)\leq (\v, g)$ and $(\v, g)\leq (\w, h)$ imply $(\u, f)\leq (\w, h)$, where $(\u, f)$, $(\v, g)$, $(\w, h) \in B$.

\end{enumerate}
In the rest of this paper, we {\em do not} care about the {\em equality} case, so we always use ``$<$", which means ``$\leq$" without equality.

Based on a partial order, we   give a generalized criterion for signature related algorithms.
\begin{define}[generalized rewritable criterion]
Given a set $B\subset \M$ and a partial order ``$<$" defined on $B$. We say $t(\u, f)$, where $(\u, f)\in B$, $f$ is nonzero and $t$ is a power product in $R$, is {\bf generalized rewritable} by $B$ ({\bf gen-rewritable} for short), if there exists $(\u', f')\in B$ such that
\begin{enumerate}

\item $\lpp(\u')$ divides $\lpp(t\u)$, and

\item $(\u', f') < (\u, f)$.

\end{enumerate}
%Particularly, if $f=0$, we say $t(\u, f)$ is gen-rewritable by any $B$.
\end{define}

In subsection \ref{subsec_specializations}, we will show how the generalized criterion specializes to some exiting criteria. In next subsection, we   describe how this generalized criterion is applied.

\subsection{Algorithm with generalized criterion} \label{subsec_algorithm}

Let $$G=\{(\v_1, g_1), \cdots, (\v_s, g_s)\}\subset \M $$ be a finite subset. We call $G$  an {\bf S-Gr\"obner basis} for $\M$ (``S" short for signature related), if for any $(\u, f)\in \M$, there exists $(\v, g)\in G$ such that
\begin{enumerate}

\item $\lpp(g)$ divides $\lpp(f)$, and

\item $\lpp(t\v) \preceq \lpp(\u)$, where $t=\lpp(f)/\lpp(g)$.

\end{enumerate}
If $G$ is an S-\gr basis for $\M$, then the set $\{g\mid (\v, g) \in G\}$ is a \gr basis of the ideal $I=\langle f_1, \cdots, f_m\rangle$. The reason is that for any $f\in \langle f_1, \cdots, f_m\rangle$, there exist $p_1, \cdots, p_m \in R$ such that $f=p_1f_1 + \cdots + p_m f_m$. Let $\u=(p_1, \cdots, p_m)$. Then $(\u, f)\in \M$ and hence there exists $(\v, g)\in G$ such that $\lpp(g)$ divides $\lpp(f)$ by the definition of S-\gr basis.

Suppose $(\u, f), (\v, g)\in \M$ are two pairs with $f$ and $g$ both nonzero. Let $t=\lcm(\lpp(f), \lpp(g))$, $t_f=t/\lpp(f)$ and $t_g=t/\lpp(g)$. If $\lpp(t_f\u) \succeq \lpp(t_g\v)$, then $$[t_f(\u, f), t_g(\v, g)]$$ is called a {\bf critical pair} of $(\u, f)$ and $(\v, g)$. The corresponding {\bf S-polynomial} is $t_f(\u, f)-ct_g(\v, g)$ where $c=\lc(f)/\lc(g)$. Please keep in mind that, for any critical pair $[t_f(\u, f), t_g(\v, g)]$, we always have $\lpp(t_f\u) \succeq \lpp(t_g\v)$. Also notice that $t_f$ (or $t_g$) here does not mean it only depends on $f$ (or $g$). For convenience, we say $[t_f(\u, f), t_g(\v, g)]$ is a critical pair of $B$, if both $(\u, f)$ and $(\v, g)$ are in $B$.

Given a critical pair $[t_f(\u, f), t_g(\v, g)]$, there are three possible cases, assuming $c=\lc(f)/\lc(g)$:

\begin{enumerate}

\item If $\lpp(t_f\u - ct_g\v) \not= \lpp(t_f\u)$, then we say $[t_f(\u, f), t_g(\v, g)]$ is {\bf non-regular}.

\item If $\lpp(t_f\u - ct_g\v) = \lpp(t_f\u)$ and $\lpp(t_f\u) = \lpp(t_g\v)$, then $[t_f(\u, f), t_g(\v, g)]$ is called {\bf super regular}.

\item If $\lpp(t_f\u) \succ \lpp(t_g\v)$, then we call $[t_f(\u, f), t_g(\v, g)]$ {\bf genuine regular} or {\bf regular} for short.

\end{enumerate}

%The sets of non-regular, super regular and regular critical pairs are disjoint with each and cover all critical pairs.

 We say a {\bf  critical pair  $[t_f(\u, f), t_g(\v, g)]$ is  gen-rewritable  } if {\em either} $t_f(\u, f)$ {\em or} $t_g(\v, g)$ is gen-rewritable.

We can now state the signature related \gr basis algorithm.

\smallskip
\noindent {\bf GB algorithm with generalized criterion (GBGC)}\\
{\bf Input: } $(\e_1, f_1),\cdots, (\e_m, f_m)$ \\
{\bf Output: } An S-\gr basis for $M=\langle (\e_1, f_1),\cdots (\e_m, f_m) \rangle $  \\
\lbegin\\
  \SPC $G \lla \{(\e_i, f_i)\mid i=1, \cdots, m\}$\\
  \SPC $\mbox{\em CPairs} \lla \{[t_f(\u, f), t_g(\v, g)] \mid (\u, f), (\v, g)\in G\}$\\
  \SPC $G \lla G \cup \{(f_j\e_i - f_i\e_j, 0) \mid 1 \leq i < j \leq m\}$  \SPC \SPC \SPC \SPC \SPC ($\divideontimes$)  \\
  \SPC \lwhile {\em CPairs} $\not= \emptyset$  \ldo\\
  \SPC\SPC  $[t_f(\u, f), t_g(\v, g)] \lla $ {\bf any} critical pair  in \mbox{\em CPairs} \SPC $(\bigstar)$ \\
  \SPC\SPC  $\mbox{\em CPairs} \lla \mbox{\em CPairs} \setminus \{[t_f(\u, f), t_g(\v, g)]\}$\\
  \SPC\SPC  \lif    $[t_f(\u, f), t_g(\v, g)]$ is {\bf not gen-rewritable} by $G$   \\
    \SPC\SPC\SPC \,   and $[t_f(\u, f), t_g(\v, g)]$ is {\bf regular} \SPC \SPC \SPC \SPC  \SPC \SPC \SPC   \,  \, ($\divideontimes$) \\
  \SPC\SPC\SPC \lthen\\
  \SPC\SPC\SPC\SPC $c\lla \lc(f)/\lc(g)$\\
  \SPC\SPC\SPC\SPC $(\w, h) \lla$ reduce $t_f(\u, f)-c t_g(\v, g)$ by $G$ \\
  \SPC\SPC\SPC\SPC \lif $h\not=0$,\\
  \SPC\SPC\SPC\SPC\SPC \lthen \\
  \SPC\SPC\SPC\SPC\SPC\SPC $\mbox{\em CPairs} \lla \mbox{\em CPairs} \cup$ $\{$critical pair of \\
  \SPC\SPC\SPC\SPC\SPC\SPC\SPC\SPC $(\w, h) \mbox{ and } (\w', h') \mid (\w', h')\in G \mbox{ and } h'\not=0 \}$\\
    \SPC\SPC\SPC\SPC  \SPC \SPC   $G \lla G \cup \{(h\e_i - f_i\w, 0) \mid i =1,\cdots,m \}$   \,  \, \, ($\divideontimes$)  \\
  \SPC\SPC\SPC\SPC \lendif \SPC\SPC\SPC\SPC\SPC\SPC\SPC\SPC  \\
  \SPC\SPC\SPC\SPC  $G \lla G \cup \{(\w, h)\}$\\
  \SPC\SPC \lendif\\
  \SPC \lendwhile\\
  \SPC \lreturn $G$\\
\lend

For the above algorithm, please notice that
 \begin{enumerate}
\item The gen-rewritable criterion uses a partial order defined on $G$. While new elements are added to $G$, the partial order on $G$ needs to be updated simultaneously. Fortunately, most partial orders can be updated automatically.
\item  For the line ended with ($\bigstar$), we emphasize that any critical pair can be selected, while some other algorithm, such as GVW, always select the critical pair with minimal signature.
\item  The algorithm GBGC is still correct even without the lines  ended with ($\divideontimes$), but the algorithm  will do some redundant computations, and hence become less efficient.

\item For sake of efficiency, it suffices to record $(\lpp(\u), f)$ for each $(\u, f)\in G$ in the practical implementation.
\end{enumerate}

Next let us see the reduction process in the above algorithm. Given $(\u, f)\in \M$ and $B\subset \M$,
%there exist three ways to define if $(\u, f)$ is {\bf reducible} by $B$. These definitions are introduced in \citep{Gao10b, Ars09, Fau02} respectively.
  $(\u, f)$ is said to be {\bf  reducible} by $B$, if there exists $(\v, g)\in B$ such that $g\not=0$, $\lpp(g)$ divides $\lpp(f)$ and $\lpp(\u-c t \v)=\lpp(\u)$ where $c=\lc(f)/\lc(g)$ and $t=\lpp(f)/\lpp(g)$. If $(\u, f)$ is reducible by some $(\v, g) \in B$, we say $(\u, f)$ {\bf reduces} to $(\u, f) - c t (\v, g) = (\u - c t \v, f-c t g)$ by $(\v, g)$ where $c=\lc(f)/\lc(g)$ and $t=\lpp(f)/\lpp(g)$. This procedure is called a one-step reduction. Next, we can repeat this process  until it is  not reducible by $B$ anymore.

There are some other  ways to define the reduction process   \citep{Gao10b, Ars09, Fau02}  and all of them have a common point.  That is  $\lpp(\u) = \lpp(\u - c t \v)$, which is a key characteristic of signature related algorithms.

%\begin{enumerate}
%\item $(\u, f)$ is {\bf GVW-reducible} by $B$, if there exists $(\v, g)\in B$ such that $g\not=0$, $\lpp(g)$ divides $\lpp(f)$ and $\lpp(\u-c t \v)=\lpp(\u)$ where $c=\lc(f)/\lc(g)$ and $t=\lpp(f)/\lpp(g)$.
%\item $(\u, f)$ is {\bf F5s-reducible} by $B$, if there exists $(\v, g)\in B$ such that $g\not=0$, $\lpp(g)$ divides $\lpp(f)$ and $\lpp(\u)\succ \lpp(t\v)$ where $t=\lpp(f)/\lpp(g)$.
%\item $(\u, f)$ is {\bf F5-reducible} by $B$, if there exists $(\v, g)\in B$ such that $g\not=0$, $\lpp(g)$ divides $\lpp(f)$,  $\lpp(\u)\succ \lpp(t\v)$ and $t(\v, g)$ is not gen-rewritable by $B$ where $t=\lpp(f)/\lpp(g)$.
%\end{enumerate}
%We emphasize that any of the above definitions can be used in the new proposed algorithm. Notice that no matter which definition of reducible is used,
%This property is also crucial to the new algorithm.
%If $(\u - c t \v, f-c t g)$ is still reducible by $B$, then we can continue the reduction until the result is not reducible by $B$.

%No matter which definition of reducible and which strategy of selecting critical pairs from $TODO$ is used, the new proposed algorithm is correct if the partial order used in the generalized criterion is {\em admissible}.

In the GBGC algorithm, we say a partial order ``$<$" defined on $G$ is {\bf admissible}, if for any critical pair $[t_f(\u, f), t_g(\v, g)]$,  which   is regular and not gen-rewritable by $G$ when it is being selected from {\em CPairs} and whose corresponding S-polynomial is reduced to $(\w, h)$ by $G$, we always have $(\w, h)<(\u, f)$ after updating ``$<$" for $G \cup \{(\w, h)\}$.
We emphasize that in the above definition of admissible, the relation $(\w, h)<(\u, f)$ is
essential and $(\w, h)$ may not be related to other elements in $G$.

With the above definition, it is easy to verify whether  a partial order is admissible  in an algorithm. In next subsection, we will show  that the partial orders implied by the criteria in   F5 and GVW  are admissible.

%The following theorem shows that    the algorithm is correct if  the partial order in the generalized criterion is admissible. Its proof comes in section \ref{sec_proof}.
If the algorithm GBGC terminates in finite steps, then we have the following theorem.
\begin{theorem}  \label{thm_main}
Let $\M=\langle (\e_1,f_1),\cdots, (\e_m,f_m) \rangle$ be an $R$-module in $R^m \times R$. Then an S-\gr basis for $M$ can be constructed by the algorithm GBGC if the partial order in the generalized criterion is admissible.
\end{theorem}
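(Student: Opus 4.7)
The plan is proof by contradiction. Suppose GBGC terminates with output $G$ but $G$ is not an S-\gr basis of $\M$. Then some $(\u, f) \in \M$ with $f \neq 0$ is not S-covered by $G$, meaning no $(\v, g) \in G$ with $g \neq 0$ satisfies both $\lpp(g)\mid\lpp(f)$ and $\lpp((\lpp(f)/\lpp(g))\,\v) \preceq \lpp(\u)$. Since $\prec_2$ is well-founded on power products of $R^m$, I choose such an uncovered pair $(\u, f)$ with $T := \lpp(\u)$ minimal; consequently every $(\u', f') \in \M$ with $f' \neq 0$ and $\lpp(\u') \prec T$ is already S-covered by $G$. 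The aim is to exhibit a $G$-element that does S-cover $(\u, f)$, contradicting this choice.

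Next, one extracts a critical pair from $(\u, f)$. Because $G$ contains the generators $(\e_i, f_i)$, one can write $(\u, f) = \sum_j c_j t_j (\v_j, g_j)$ with $(\v_j, g_j) \in G$ and nonzero coefficients $c_j$. Choose this representation so as to minimise $T^\star := \max_j \lpp(t_j \v_j)$, and secondarily to minimise the number of summands attaining $T^\star$. A standard rewriting argument, using the S-covers available below signature $T$ by minimality of $T$, forces $T^\star = T$. Inspect the summands with $\lpp(t_j \v_j) = T$: if some such $j$ also satisfies $\lpp(t_j g_j) = \lpp(f)$ then $(\v_j, g_j)$ already S-covers $(\u, f)$, a contradiction; otherwise the top polynomial-side terms of these summands cancel, producing indices $j_1, j_2$ from which one reads off a critical pair $P$ of $(\v_{j_1}, g_{j_1})$ and $(\v_{j_2}, g_{j_2})$ whose associated signature divides $T$.

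The pair $P$ was placed into \emph{CPairs} at some stage and eventually selected; at selection exactly one of three cases held. If $P$ was non-regular, its S-polynomial has signature strictly below the top side of $P$ and hence strictly below $T$; substituting it back into the representation of $(\u, f)$ via an S-cover guaranteed by minimality of $T$ strictly improves $T^\star$ or its multiplicity, contradicting the minimal choice of representation. If $P$ was regular and not gen-rewritable, then it was reduced by $G$ to some $(\w, h)$ which was appended to $G$, and $(\w, h)$ scaled by the appropriate power product supplies the missing S-cover of $(\u, f)$. If $P$ was gen-rewritable at selection, admissibility of ``$<$'' is used in an essential way: gen-rewritability furnishes $(\u', f') \in G$ whose signature divides that of the top side of $P$ and with $(\u', f')$ strictly $<$-below that top side, so one iterates the argument with $(\u', f')$ in place of the top element. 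Admissibility guarantees that every reduction-added element of $G$ is strictly $<$-below its generating pair's top side, so the iteration produces a strictly $<$-descending chain in the finite set $G$; antisymmetry then forces the chain to terminate at a configuration falling into one of the first two cases, producing the desired cover. The main obstacle is managing this iterative descent --- verifying that the signature-and-polynomial divisibility conditions propagate correctly along the rewriting chain so that the terminal $G$-element really does S-cover $(\u, f)$ --- and admissibility is exactly the hypothesis that makes this well-founded rewriting run to completion.
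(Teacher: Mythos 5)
The overall architecture (minimal counterexample on signature, rewriting to extract a critical pair, case split on what the algorithm did with that pair, descending chain via admissibility for the gen-rewritable case) is the right skeleton and parallels the paper's, but two steps do not actually go through as written.

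First, the rewriting setup is confused about which side of the module is being reduced. You minimise $T^\star := \max_j \lpp(t_j\v_j)$, the \emph{signature}-side maximum; but $\lpp(\u) \preceq \max_j \lpp(t_j\v_j)$ always holds, and the trivial representation $(\u,f)=\sum p_i(\e_i,f_i)$ already attains $T^\star = T$, so the minimisation is vacuous. The set of summands attaining the signature-max $T$ and the set attaining the polynomial-max $\max_j \lpp(t_j g_j)$ are in general disjoint, so the assertion that ``the top polynomial-side terms of these summands cancel'' does not follow; $\lpp(f)$ may be contributed entirely by summands with $\lpp(t_j\v_j)\prec T$, in which case nothing forces cancellation among the signature-top summands. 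The paper's Lemma~\ref{lem_stdrepresentation} handles this correctly: start from $(\u,f)=\sum p_i(\e_i,f_i)$, where the signature bound $\lpp(\u)\succeq\lpp(p_i\e_i)$ is automatic because the $\e_i$ are distinct basis vectors, and reduce the \emph{polynomial}-side maximum $T=\max_i\lpp(p_if_i)$; the S-polynomials extracted from that cancellation inherit the signature bound and can be substituted back if they themselves have standard representations.

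Second, and more seriously, the gen-rewritable case is not closed. Your $<$-descending chain in the finite set $G$ does terminate, and the terminal element $(\u_0,f_0)$ has $\lpp(\u_0)$ dividing the relevant signature and is not itself gen-rewritable at the needed multiple. But to conclude that $(\u_0,f_0)$ actually S-covers the object in question you also need the polynomial-side inequality: with $t_0 = \lpp(t_f\u)/\lpp(\u_0)$, that $\lpp$ of the S-polynomial's $f$-component is $\succeq \lpp(t_0 f_0)$. Nothing in admissibility gives this for free, and you flag it yourself as ``the main obstacle'' without resolving it. This is exactly the content of the paper's Claim~2, \textbf{Second} step: it is proved by contradiction --- assuming $\lpp(\brf)\prec\lpp(t_0f_0)$, one forms $(\bru-c_0t_0\u_0,\ \brf-c_0t_0f_0)$, which has strictly smaller signature and hence a standard representation, extracts a $G$-element $(\w,h)$ covering it, and shows the resulting critical pair of $(\u_0,f_0)$ and $(\w,h)$ is strictly smaller (in the paper's four-tier order on critical pairs) and therefore already handled, which makes $t_0(\u_0,f_0)$ gen-rewritable after all, a contradiction. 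An analogous argument (Claim~3) is also needed for the case where the lower side $t_g(\v,g)$ of the critical pair is what is gen-rewritable. Without these steps the chain argument produces a candidate cover on the signature side only, and the proof does not close. I would also note that your regular-not-rewritable case asserts the reducer $(\w,h)$ ``supplies the missing S-cover of $(\u,f)$'' without checking $\lpp(h)\mid\lpp(f)$, and silently assumes $h\neq0$; both require justification. To repair the argument you would essentially need to import the paper's Claim~2/Claim~3 machinery, at which point the proof becomes the paper's.
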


\subsection{Specializations} \label{subsec_specializations}

In this subsection, we focus on specializing the generalized criterion to the criteria of F5 and GVW   by using appropriate admissible partial orders in the algorithm GBGC.

% In each subsubsection, we first list the criteria in each algorithm, then choose an appropriate admissible partial order,
% and show how the generalized criterion specialize to these criteria at last.

\subsubsection{Criteria of F5}

First, we  list the criteria in F5 by current notations. In F5, the order $\prec_2$ on $R^m$ is obtained by extending $\prec_1$ to $R^m$ in a POT fashion with $\e_1\succ_2 \cdots \succ_2 \e_m$.

\begin{define}[syzygy criterion]
Given a set $B\subset \M$, we say $t(\u, f)$, where $(\u, f)\in B$ with $\lpp(\u) = x^\alpha \e_i$, $f$ is nonzero and $t$ is a power product in $R$, is {\bf F5-divisible} by $B$, if there exists $(\u', f')\in B$ with $\lpp(\u')=x^\beta \e_j$,  such that
\begin{enumerate}

\item $\lpp(f')$ divides $tx^\alpha$, and

\item $\e_i \succ \e_j$.

\end{enumerate}
\end{define}

\begin{define}[rewritten criterion]
Given a set $B\subset \M$, we say $t(\u, f)$, where $(\u, f)\in B$ and $t$ is a power product in $R$, is {\bf F5-rewritable} by $B$, if there exists $(\u', f')\in B$ such that
\begin{enumerate}

\item $\lpp(\u')$ divides $\lpp(t\u)$, and

\item $(\u', f')$ is added to $B$ later than $(\u, f)$.

\end{enumerate}
\end{define}

In F5, given a critical pair $[t_f(\u, f), t_g(\v, g)]$ of $B$, if either $t_f(\u, f)$ or $t_g(\v, g)$ is F5-divisible or F5-rewritable by $B$, then this critical pair is redundant.

Next, we show how to specialize the generalized criterion to both syzygy criterion and rewritten criterion at the same time. For this purpose, we choose the following partial order defined on $G$ which can be updated automatically when a new element is added to $G$: we say $(\u', f')<(\u, f)$ where $(\u', f'), (\u, f)\in G$, if
\begin{enumerate}

\item $f'=0$ and $f\not=0$,

\item otherwise, $(\u', f')$ is added to $G$ later than $(\u, f)$.

\end{enumerate}
The above partial order ``$<$" is admissible in the algorithm GBGC. Because for any critical pair $[t_f(\u, f), t_g(\v, g)]$,   which is regular and not gen-rewritable by $G$ when it is being selected from {\em CPairs} and whose corresponding S-polynomial is reduced to $(\w, h)$ by $G$, the pair $(\w, h)$ is always added to $G$ later than $(\u, f)$ no matter   $h $ is $0$ or not.

At last, we show how the generalized criterion specializes to the rewritten criterion and
syzygy criterion. For the rewritten criterion, the specialization is obvious by the definition
of ``$<$". For the syzygy criterion, if $t(\u, f)$, where $(\u, f)\in G $
with $\lpp(\u) = x^\alpha \e_i$ and $f \not= 0$, is F5-divisible by
some $(\u', f')\in G$ with $\lpp(\u')=x^\beta \e_j$, we have $\lpp(f')$
divides $tx^\alpha$ and $\e_i \succ \e_j$. According to the algorithm GBGC,
since $f'\not = 0$, we have $(f'\e_i - f_i\u', 0)\in G$
and $\lpp(f'\e_i - f_i\u') = \lpp(f')\e_i$ divides $t x^\alpha \e_i$. So $t(\u, f)$
is gen-rewritable by $(f'\e_i - f_i\u', 0)\in G$ by definition.

%%Therefore, both criteria of F5 are specialized by the generalized criterion.

With a similar discussion, the generalized criterion can also specialize to the criteria in \citep{Ars09}, since the extended F5 algorithm in that paper only differs from the original F5 in the order $\prec_2$ on $R^m$.

\subsubsection{Criteria of GVW}

 First, we rewrite the criteria  in GVW by current notations.

\begin{define}[First Criterion]
Given a set $B\subset \M$. We say $t(\u, f)$, where $(\u, f)\in B$, $f$ is nonzero and $t$ is a power product in $R$, is {\bf GVW-divisible} by $B$, if there exists $(\u', f')\in B$ such that
\begin{enumerate}

\item $\lpp(\u')$ divides $\lpp(t\u)$, and

\item $f'=0$.

\end{enumerate}
\end{define}

\begin{define}[Second Criterion]
Given a set $B\subset \M$. We say $t(\u, f)$, where $(\u, f)\in B$ and $t$ is a power product in $R$, is {\bf eventually super top-reducible} by $B$, if $t(\u, f)$ is reducible and reduced to $(\w, h)$ by $B$, and then there exists $(\u', f')\in B$ such that
\begin{enumerate}

\item $\lpp(\u')$ divides $\lpp(\w)$, and

\item $\lpp(f')$ divides $\lpp(h)$, $\frac{\lpp(\w)}{\lpp(\u')} = \frac{\lpp(h)}{\lpp(f')}$ and $\frac{\lc(\w)}{\lc(\u')} = \frac{\lc(h)}{\lc(f')}$.

\end{enumerate}
\end{define}
In GVW, given a critical pair $[t_f(\u, f), t_g(\v, g)]$ of $B$, if $t_f(\u, f)$ is GVW-divisible or eventually super top-reducible by $B$, then this critical pair is redundant.
The GVW algorithm also has a third criterion.

\smallskip
\noindent{\bf Third Criterion}
{\em If there are two critical pairs $[t_f(\u, f), t_g(\v, g)]$ and $[\bar{t}_f(\bar{\u}, \bar{f}), \bar{t}_g(\bar{\v}, \bar{g})]$ of $B$ such that $\lpp(t_f\u) = \lpp(\bar{t}_f\bar{\u})$, then  at least one of the critical pairs is redundant.}

Next, in order to specialize the generalized criterion to the above three criteria at the same time, we use the following partial order defined on $G$ which can also be updated automatically when a new element is added to $G$: we say $(\u', f')<(\u, f)$ where $(\u', f'), (\u, f)\in G$, if one of the following two conditions holds:
\begin{enumerate}

\item  $\lpp(t'f') < \lpp(tf)$, where $t'= \frac{\lcm(\lpp(\u), \lpp(\u'))}{\lpp(\u')}$ and $t= \frac{\lcm(\lpp(\u), \lpp(\u'))}{\lpp(\u)}$ such that $\lpp(t'\u')=\lpp(t\u)$.

\item $\lpp(t'f') = \lpp(tf)$ and  $(\u', f')$ is added to $G$ later than $(\u, f)$.

\end{enumerate}
The above partial order ``$<$" is admissible in the algorithm GBGC. Because for any critical pair $[t_f(\u, f), t_g(\v, g)]$,  which  is regular and not gen-rewritable by $G$ when it is being selected from {\em CPairs} and whose corresponding S-polynomial is reduced to $(\w, h)$ by $G$, we always have $\lpp(t_f\u) = \lpp(\w)$ and $\lpp(t_ff) > \lpp(h)$.

%, and the pair $(\w, h)$ is also always added to $G$ later than $(\u, f)$.

At last, let us see the three criteria of GVW.

For the first criterion, if $t(\u, f)$ is GVW-divisible by some $(\u', f')\in G$, then $t(\u, f)$ is also gen-rewritable by $(\u', f') \in G$ by definition.

For the second criterion, if $t(\u, f)$, where $(\u, f)\in G$, is eventually super top-reducible by $G$, then $t(\u, f)$ is reduced to $(\w, h)$ and there exists $(\u', f')\in G$ such that $\lpp(\u')$ divides $\lpp(\w)$, $\lpp(f')$ divides $\lpp(h)$, $\frac{\lpp(\w)}{\lpp(\u')} = \frac{\lpp(h)}{\lpp(f')}$ and $\frac{\lc(\w)}{\lc(\u')} = \frac{\lc(h)}{\lc(f')}$. Then we have $\lpp(t'\u') = \lpp(\w) = \lpp(t\u)$ and $\lpp(t'f') = \lpp(h) < \lpp(tf)$, which means $(\u', f')<(\u, f)$. So $t(\u, f)$ is gen-rewritable by $(\u', f')\in G$.

For the third criterion, we have $\lpp(t_f\u) = \lpp(\bar{t}_f\bar{\u})$. First, if $(\u, f) < (\bar{\u}, \bar{f})$, then $\bar{t}_f(\bar{\u}, \bar{f})$ is gen-rewritable by $(\u, f)$ and hence $[\bar{t}_f(\bar{\u}, \bar{f}), \bar{t}_g(\bar{\v}, \bar{g})]$ is redundant; the reverse is also true. Second, if $(\u, f) = (\bar{\u}, \bar{f})$, one of the two critical pairs should be selected earlier from $CPairs$, assuming $[t_f(\u, f),$ $t_g(\v, g)]$ is selected first. If $[t_f(\u, f), t_g(\v, g)]$ is regular and not gen-rewritable, then its S-polynomial is reduced to $(\w, h)$ and $(\w, h)$ is added to $G$ by the algorithm GBGC. Since ``$<$" is admissible, we have $(\w, h) < (\u, f)$. Thus, when $[\bar{t}_f(\bar{\u}, \bar{f}), \bar{t}_g(\bar{\v}, \bar{g})]$ is selected afterwards, it will be redundant, since $\bar{t}_f(\bar{\u}, \bar{f})$ is gen-rewritable by $(\w, h)$. Otherwise, if $[t_f(\u, f), t_g(\v, g)]$ is not regular, or it is regular and gen-rewritable, then $[t_f(\u, f),$ $t_g(\v, g)]$ is redundant. Anyway, at least one of the critical pairs is redundant in the algorithm.

%and we only need to consider $[\bar{t}_f(\bar{\u}, \bar{f}), \bar{t}_g(\bar{\v}, \bar{g})]$ sooner.

\section{Proofs for the Correctness of the Generalized Criterion} \label{sec_proof}

To prove the main theorem (Theorem \ref{thm_main}) of the paper, we  need the following definition and lemmas.

In this section, we always assume that $\M$ is an $R$-module generated by $\{(\e_1,f_1), \cdots, (\e_m,f_m)\}$. Let $(\u, f)\in \M$, we say $(\u, f)$ has a {\bf standard representation} w.r.t. a set $B\subset \M$, if there exist $p_1, \cdots, p_s \in R$ such that $$(\u, f) = p_1(\v_1, g_1)+\cdots+p_s(\v_s, g_s),$$ where $(\v_i, g_i)\in B$, $\lpp(\u) \succeq \lpp(p_i\v_i)$ and $\lpp(f)\succeq \lpp(p_ig_i)$ for $i=1,\cdots, s$. Clearly, if $(\u, f)$ has a  standard representation w.r.t. $B$, then there exists $(\v, g)\in B$ such that $\lpp(g)$ divides $\lpp(f)$ and $\lpp(\u) \succeq \lpp(t\v)$ where $t=\lpp(f)/\lpp(g)$.

\begin{lemma} \label{lem_correctness}
Let $G$ be a finite set of generators for $\M$. Then $G$ is an S-Gr\"obner basis for $\M$ if for any critical pair $[t_f(\u, f)$, $t_g(\v, g)]$ of $G$, the S-polynomial of $[t_f(\u, f), t_g(\v, g)]$   always has a standard representation w.r.t. $G$.
\end{lemma}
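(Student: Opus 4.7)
The plan is to establish the stronger conclusion that under the stated hypothesis every $(\u,f)\in\M$ admits a standard representation with respect to $G$. Once that is in hand, the S-Gr\"obner property is immediate: choosing any summand $p_{i_0}(\v_{i_0},g_{i_0})$ whose polynomial part realizes $\lpp(p_{i_0}g_{i_0})=\lpp(f)$ yields $(\v_{i_0},g_{i_0})\in G$ with $\lpp(g_{i_0})\mid\lpp(f)$ and $\lpp(t\v_{i_0})=\lpp(p_{i_0}\v_{i_0})\preceq\lpp(\u)$ for $t=\lpp(f)/\lpp(g_{i_0})$. This is exactly the observation made just after the definition of standard representation.

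For a fixed $(\u,f)\in\M$ I would consider the nonempty collection of all representations $(\u,f)=\sum_i p_i(\v_i,g_i)$ with $(\v_i,g_i)\in G$, and assign to each the measure $(T_1,T_2)=(\max_i\lpp(p_i\v_i),\ \max_i\lpp(p_ig_i))$. Since $\prec_1$ and $\prec_2$ are term orders, both coordinates are well-ordered and bounded below by $\lpp(\u)$ and $\lpp(f)$ respectively; a representation minimizing $(T_1,T_2)$ lexicographically therefore exists. The goal is to show the minimum satisfies $T_1=\lpp(\u)$ and $T_2=\lpp(f)$, which makes the representation standard.

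The polynomial-cancellation subcase ($T_2\succ\lpp(f)$) I would handle by adapting Buchberger's classical argument to the signature setting. Top-coefficient cancellation at $T_2$ forces two indices $i,j$ with $\lpp(p_ig_i)=\lpp(p_jg_j)=T_2$. Setting $\tau=\lcm(\lpp(g_i),\lpp(g_j))$, $s=T_2/\tau$, $\tau_i=\tau/\lpp(g_i)$, $\tau_j=\tau/\lpp(g_j)$, and orienting so that $\lpp(\tau_i\v_i)\succeq\lpp(\tau_j\v_j)$, the pair $[\tau_i(\v_i,g_i),\tau_j(\v_j,g_j)]$ is a critical pair of $G$; its S-polynomial $(\w,h)$ admits, by hypothesis, a standard representation $(\w,h)=\sum_k q_k(\v_k,g_k)$. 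Substituting the identity $\lm(p_i)(\v_i,g_i)=\lc(p_i)s\bigl(c\tau_j(\v_j,g_j)+\sum_k q_k(\v_k,g_k)\bigr)$ with $c=\lc(g_i)/\lc(g_j)$ into the original representation, and iterating this pairing over the indices that contribute to $T_2$, produces a new representation with strictly smaller $T_2$. The compatibility check that $T_1$ is not raised uses the standard-representation bound $\lpp(q_k\v_k)\preceq\lpp(\w)\preceq\lpp(\tau_i\v_i)$, which after multiplication by $s$ gives at most $\lpp(p_i\v_i)\preceq T_1$; so lex-minimality of $(T_1,T_2)$ is contradicted.

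The signature-cancellation subcase ($T_1\succ\lpp(\u)$) is the step I expect to be the main obstacle. Leading-term cancellation in $R^m$ forces any two involved indices to have $\lpp(\v_i)=\lpp(\v_j)$ and $\lpp(p_i)=\lpp(p_j)$, but the critical pair is anchored at the polynomial leading power products of $g_i,g_j$ rather than at the module leading monomials of $\v_i,\v_j$, so the rewriting step is more delicate. The approach is the same in spirit: invoke the hypothesis on the polynomial critical pair of $(\v_i,g_i),(\v_j,g_j)$, scale by $\nu=\lpp(p_i)=\lpp(p_j)$, and use the signature bound on its S-polynomial's standard representation to confirm that the substitution strictly lowers $T_1$ while leaving $T_2$ unaffected. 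Once both subcases are contradicted, the minimal representation must satisfy $T_1=\lpp(\u)$ and $T_2=\lpp(f)$; it is therefore standard, and the lemma follows.
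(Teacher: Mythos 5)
Your proof is substantially more ambitious than what the paper actually does: the paper disposes of this lemma in one line by citing the theory of $t$-representations in Becker--Weispfenning, and saves its detailed rewriting argument for the subsequent Lemma~\ref{lem_stdrepresentation}, where the extra hypothesis $\{(\e_1,f_1),\dots,(\e_m,f_m)\}\subset G$ lets one start from the canonical representation $(\u,f)=\sum p_i(\e_i,f_i)$ with $\u=\sum p_i\e_i$. That choice makes $\max_i\lpp(p_i\e_i)=\lpp(\u)$ automatic at the outset and preserved under each substitution (because standard representations of S-polynomials carry a signature bound), so the paper never has to confront a ``signature-cancellation'' case at all. Your lex-minimization over $(T_1,T_2)$ is a reasonable alternative skeleton, and your treatment of the $T_2\succ\lpp(f)$ subcase is essentially the classical Buchberger replacement; that part is fine.

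The genuine gap is exactly where you flagged it: the $T_1\succ\lpp(\u)$ subcase. Two things go wrong. First, the assertion that leading-term cancellation in $R^m$ ``forces $\lpp(\v_i)=\lpp(\v_j)$ and $\lpp(p_i)=\lpp(p_j)$'' is false; it only forces $\lpp(p_i)\lpp(\v_i)=\lpp(p_j)\lpp(\v_j)$, and the two factors can differ (e.g.\ $\lpp(\v_i)=x\e_1$, $\lpp(p_i)=y$ versus $\lpp(\v_j)=y\e_1$, $\lpp(p_j)=x$). Second, and more fundamentally, the critical pair of $(\v_i,g_i)$ and $(\v_j,g_j)$ is built from $\lcm(\lpp(g_i),\lpp(g_j))$, so its cofactors $\tau_i,\tau_j$ are dictated by the \emph{polynomial} leading power products. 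There is no reason for $\lpp(p_i)$ or $\lpp(p_j)$ to be multiples of $\tau_i,\tau_j$, so you cannot simply ``scale by $\nu$'' and substitute the S-polynomial's standard representation. Moreover the module part $\tau_i\v_i-c\tau_j\v_j$ of that S-polynomial generally exhibits \emph{no} cancellation at the top (that only happens when the critical pair is non-regular), so even if the scaling worked the substitution would not strictly lower $T_1$. The hypothesis of the lemma only controls polynomial S-polynomials, and that control lands on $T_2$, not on $T_1$; your sketch does not bridge that gap, and ``the approach is the same in spirit'' is not a proof. To repair it you would either need to restrict to representations that start from the unit-vector generators (as the paper does in Lemma~\ref{lem_stdrepresentation}), or produce a separate argument showing the hypothesis forces $G$ to contain enough syzygy elements to rewrite module-level cancellations.
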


\begin{proof}
The proof of this lemma is direct by the theory of $t$-representation. For more details, please see \citep{Becker93}.
\end{proof}

\begin{lemma} \label{lem_stdrepresentation}
Let $G$ be a finite subset of $\M$ and $\{(\e_1, f_1),$ $\cdots, (\e_m, f_m)\}\subset G$. For an element $(\u, f)$ in $\M$, $(\u, f)$ has a standard representation w.r.t. $G$  if for any critical pair $[t_g(\v, g), t_h(\w, h)]$ of $G$ with $\lpp(\u) \succeq \lpp(t_g \v)$, the S-polynomial of $[t_g(\v, g), t_h(\w, h)]$  always has a standard representation w.r.t. $G$.
\end{lemma}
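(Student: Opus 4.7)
My plan is to adapt the classical proof of Buchberger's Gr\"obner basis criterion to the signature-enriched setting, using an outer induction on $\lpp(\u)$ with respect to $\prec_2$ combined with an inner minimization argument on the representation.

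First I would note that because $\{(\e_1, f_1),\ldots,(\e_m, f_m)\}\subset G$, every $(\u, f)\in\M$ admits at least one representation: writing $\u=(u_1,\ldots, u_m)$, one has $(\u, f)=\sum_i u_i(\e_i, f_i)$. The argument proceeds by induction on $\lpp(\u)$ under $\prec_2$, using the crucial observation that the lemma's hypothesis for $(\u, f)$ automatically implies the same hypothesis for every $(\u', f')\in\M$ with $\lpp(\u')\prec_2\lpp(\u)$ (the condition on critical pairs only weakens), so the inductive hypothesis supplies standard representations for all such smaller-signature elements. Among representations $(\u, f)=\sum_k p_k(\v_k, g_k)$ with $(\v_k, g_k)\in G$, I would pick one that lexicographically minimizes the pair $(T_1, T_2)$ where $T_1=\max_k\lpp(p_k\v_k)$ under $\prec_2$ and $T_2=\max_k\lpp(p_k g_k)$ under $\prec_1$; well-foundedness of the term orders ensures a minimum exists. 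The goal is then to prove $T_1\preceq\lpp(\u)$ and $T_2\preceq\lpp(f)$, which is precisely the definition of a standard representation.

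For the polynomial-cancellation case $T_1\preceq\lpp(\u)$, $T_2\succ\lpp(f)$, leading-term cancellation on the polynomial side at level $T_2$ picks out indices $i,j$. Set $t=\lcm(\lpp(g_i),\lpp(g_j))$, $t_i=t/\lpp(g_i)$, $t_j=t/\lpp(g_j)$, and (WLOG) $\lpp(t_i\v_i)\succeq\lpp(t_j\v_j)$, which gives a critical pair of $G$. The key divisibility $\lpp(t_i\v_i)\mid\lpp(p_i\v_i)$, combined with $\lpp(p_i\v_i)\preceq T_1\preceq\lpp(\u)$, yields $\lpp(t_i\v_i)\preceq\lpp(\u)$, so the critical pair's big signature is $\preceq\lpp(\u)$ and the hypothesis supplies a standard representation of the S-polynomial $t_i(\v_i, g_i)-c\,t_j(\v_j, g_j)$ (with $c=\lc(g_i)/\lc(g_j)$). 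A direct computation shows that $\lm(p_i)(\v_i, g_i)+\lm(p_j)(\v_j, g_j)$ is a polynomial-monomial multiple of this S-polynomial, so substituting its standard representation strictly reduces $(T_1, T_2)$ lexicographically and contradicts minimality.

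The hardest case will be $T_1\succ\lpp(\u)$, where cancellation occurs on the signature side at level $T_1$. For $k\in S_{T_1}:=\{k:\lpp(p_k\v_k)=T_1\}$, all $\v_k$-leading terms lie in a single module component $\e_{k_0}$, and for any $i,j\in S_{T_1}$ the auxiliary element $(\u^{ij}, f^{ij}):=r_i(\v_i, g_i)-c_\v r_j(\v_j, g_j)\in\M$---with $r_i, r_j$ coming from the signature-lcm $\lcm(\lpp(\v_i),\lpp(\v_j))$ and $c_\v=\lc(\v_i)/\lc(\v_j)$---satisfies $\lpp(\u^{ij})\prec_2 T_1$; substituting a standard representation of each auxiliary element would remove $i,j$ from $S_{T_1}$ and ultimately force $T_1$ below $\lpp(\u)$. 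The main obstacle is exactly here: $(\u^{ij}, f^{ij})$ is \emph{not} in general the S-polynomial of a critical pair in the paper's sense, since critical pairs are defined by the polynomial leading-term lcm rather than the signature leading-term lcm; moreover $\lpp(\u^{ij})\prec_2 T_1$ does not automatically yield $\lpp(\u^{ij})\prec_2\lpp(\u)$, so the inductive hypothesis is not immediately applicable. Overcoming this will require a careful combination of the outer induction with a module-syzygy decomposition of the cancellation among the $\lm(p_k)\v_k$ for $k\in S_{T_1}$, so that the resulting auxiliary elements have signatures strictly below $\lpp(\u)$. The assumption $\{(\e_i, f_i)\}\subset G$ should prove essential here, as it guarantees enough low-signature generators for the descent to close.
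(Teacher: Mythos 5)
You have correctly identified the place where your argument breaks, but the paper avoids that difficulty altogether, and your framing obscures the fix. Instead of minimizing over \emph{all} representations $\sum p_k(\v_k, g_k)$, the paper exploits the hypothesis $\{(\e_1,f_1),\ldots,(\e_m,f_m)\}\subset G$ at the outset: writing $\u=(p_1,\ldots,p_m)$, the canonical representation $(\u, f)=\sum_i p_i(\e_i, f_i)$ automatically has $\max_i\lpp(p_i\e_i)=\lpp(\u)$, since the $p_i\e_i$ live in distinct module components and never cancel. So the signature side is under control from the start, and only the polynomial-side maximum $T=\max_i\lpp(p_if_i)$ needs to be lowered. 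When one substitutes the standard representation of an S-polynomial of a critical pair $[t_g(\v,g),t_h(\w,h)]$ appearing in the top-level cancellation, the new summands $q_i(\v_i,g_i)$ still satisfy $\lpp(tq_i\v_i)\preceq\lpp(t t_g\v)\preceq\lpp(\u)$, because a standard representation bounds the module-side power products by $\lpp(t_g\v - ct_h\w)\preceq\lpp(t_g\v)$. Thus the signature bound $\preceq\lpp(\u)$ is preserved while $T$ strictly decreases, and the hypothesis of the lemma is always applicable to the critical pairs that arise. There is no ``$T_1\succ\lpp(\u)$'' case to contend with.

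Your proposal, by contrast, starts from an arbitrary lexicographically minimizing representation and so genuinely falls into the signature-cancellation case. You correctly observe that the resulting auxiliary elements $(\u^{ij},f^{ij})$ are not S-polynomials of critical pairs (critical pairs here are defined by polynomial lcms, not signature lcms), and that $\lpp(\u^{ij})\prec_2 T_1$ does not give $\lpp(\u^{ij})\prec_2\lpp(\u)$. That is a real gap: neither the lemma's hypothesis nor your outer induction on $\lpp(\u)$ supplies a standard representation for $(\u^{ij},f^{ij})$, and no module-syzygy decomposition of the cancellation among the $\lm(p_k)\v_k$ will in general drop the signature below $\lpp(\u)$ when $T_1\succ\lpp(\u)$. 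You intuit that $\{(\e_i,f_i)\}\subset G$ ``should prove essential,'' which is exactly right, but its role is not to enable a descent inside the hard case — it is to let you bypass that case entirely by choosing the starting representation so that $T_1=\lpp(\u)$ from the very beginning. Also note that the outer induction on $\lpp(\u)$ is then unnecessary; the paper's argument is a single loop decreasing $T$ under $\prec_1$, terminating by well-foundedness.
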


\begin{proof}
For $(\u, f)\in \M$, we have $\u\cdot \f = f$ where $\f = (f_1, \cdots, f_m)\in R^m$. Assume $\u=p_1\e_1+\cdots+p_m\e_m$  where $p_i\in R$. Clearly, $(\u, f) = p_1(\e_1, f_1) + \cdots + p_m(\e_m, f_m).$ Notice that $\lpp(\u)\succeq \lpp(p_i\e_i)$ for $i=1,\cdots,m$. If $\lpp(f)\succeq \lpp(p_if_i)$, then we have already got a standard representation for $(\u, f)$ w.r.t. $G$. Otherwise, we will prove it by the classical method. Let $T=\max\{\lpp(p_if_i)\mid i=1,\cdots,m\}$, then   $T\succ \lpp(f)$ holds by assumption.
%We next lower $T$ step by step until $T=\lpp(f)$.
Consider the equation
$$(\u, f)= \sum_{\lpp(p_if_i)=T} \lc(p_i)\lpp(p_i)(\e_i, f_i) +\sum_{\lpp(p_jf_j)\prec T}p_j(\e_j, f_j)$$ $$+ \sum_{\lpp(p_if_i)=T} (p_i - \lc(p_i)\lpp(p_i))(\e_i, f_i).\eqno(1)$$
The leading power products in the first sum should be canceled, since we have $T\succ \lpp(f)$. So the first sum can be rewritten as a sum of S-polynomials, that is $$\sum_{\lpp(p_if_i)=T}\lc(p_i)\lpp(p_i)(\e_i, f_i)= \sum \bar{c}t(t_g(\v, g)-ct_h(\w, h)),$$
 where $(\v, g), (\w, h)\in G$, $\bar{c}\in \k$, $t_g(\v, g)-ct_h(\w, h)$ is the S-polynomial of $[t_g(\v, g), t_h(\w, h)]$, $\lpp(t\ t_g g)=\lpp(t\ t_h h)=T$ and $\lpp(\u)\succeq \lpp(t\ t_g \v)\succeq \lpp(t\ t_h\w)$ such that we have   $\lpp(t ( t_g g - c  t_h h)) \prec T $. By the hypothesis of the lemma, the S-polynomial $(t_g\v-c t_h \w, t_g g-c t_h h)$ has a standard representation w.r.t. $G$, that is, $(t_g\v-c t_h \w, t_g g-c t_h h) = \sum q_i(\v_i, g_i)$, where $(\v_i, g_i)\in G$,  $\lpp(\u)\succeq \lpp(t\ t_g\v)\succeq \lpp(t\ q_i \v_i)$ and $\lpp(t_g g-c t_h h)\succeq \lpp(q_i g_i)$. Substituting these standard representations back to the original expression of $(\u, f)$ in $(1)$, we get a new representation for $(\u, f)$. Let $T^{(1)}$ be the maximal leading power product of the polynomial parts appearing in the right side of the new representation. Then we have $T\succ T^{(1)}$.  Repeat the above process until  $T^{(s)}$ is same as $\lpp(f)$ for some $s$ after finite steps. Finally, we always get a standard representation for $(\u, f)$.
\end{proof}

Before giving a full proof of the theorem, we introduce the following definitions first.

%\begin{define}
Suppose $[t_f(\u, f), t_g(\v, g)]$ and $[t_{f'}(\u', f'), t_{g'}(\v', g')]$ are two critical pairs, we say $[t_{f'}(\u', f'), t_{g'}(\v', g')]$ is {\bf smaller} than $[t_f(\u, f)$, $t_g(\v, g)]$  if one of the following conditions holds:
\begin{enumerate}

\item[(a).] $\lpp(t_{f'}\u') \prec \lpp(t_f\u)$.

\item[(b).] $\lpp(t_{f'}\u') = \lpp(t_f\u)$ and $(\u', f')<(\u, f)$.

\item[(c).] $\lpp(t_{f'}\u') = \lpp(t_f\u)$, $(\u', f') = (\u, f)$ and $\lpp(t_{g'}\v') \prec \lpp(t_g\v)$.

\item[(d).] $\lpp(t_{f'}\u') = \lpp(t_f\u)$,  $(\u', f') = (\u, f)$, $\lpp(t_{g'}\v') = \lpp(t_g\v)$ and $(\v', g')<(\v, g)$.

\end{enumerate}
%\end{define}

 Let $D$ be a set of critical pairs. A critical pair     in $D$ is said to be  {\bf minimal}  if there is no critical pair in $D$  smaller  than this critical pair.
 The minimal critical pair in $D$ may not be unique, but we can always find one   if $D$ is finite.

Now, we can give the proof of the main theorem.

\begin{proof}[Proof of Theorem \ref{thm_main}] If the algorithm terminates in finite steps,  then  $G_{end}$ denotes the set returned by the algorithm GBGC.
Since $\{(\e_1, f_1), \cdots, (\e_m, f_m)\}\subset G_{end}$, then $G_{end}$ is a set of generators for $\M$. In the rest of this proof, we focus on showing $G_{end}$ is an S-\gr basis for $\M$.

We will take the following strategy to prove the theorem.\\
{\bf Step 1:}  Let $Todo$ be the set of  {\em all} the critical pairs of $G_{end}$, and $Done$ be an empty set.\\
 {\bf Step 2:} Select a minimal critical pair $[t_f(\u, f), t_g(\v, g)]$ in $Todo$. \\
  {\bf Step 3:} For such $[t_f(\u, f), t_g(\v, g)]$,  we will prove the following facts.
\begin{enumerate}
\item[(F1).] The  S-polynomial of  $[t_f(\u, f), t_g(\v, g)]$    has a standard representation w.r.t. $G_{end}$.

\item[(F2).] If $[t_f(\u, f), t_g(\v, g)]$ is {\em super regular} or {\em regular}, then $t_f(\u, f)$ is gen-rewritable by $G_{end}$.

\end{enumerate}
{\bf Step 4:} Move $[t_f(\u, f), t_g(\v, g)]$ from $Todo$ to $Done$, i.e. $Todo \lla Todo \setminus \{ [t_f(\u, f), t_g(\v, g)]\}$ and $Done \lla Done\ \cup$\\ $\{ [t_f(\u, f), t_g(\v, g)]\}$. \\
We can repeat {\bf Step 2, 3, 4} until $Todo$ is empty. Please notice that for every critical pair in $Done$, it always has property (F1). Particularly, if this critical pair is super regular or regular, then it has properties  (F1) and (F2). When $Todo$ is empty, all the critical pairs of $G_{end}$ will lie in $Done$, and hence, all the corresponding S-polynomials  have standard representations w.r.t. $G_{end}$. Then $G_{end}$ is an S-Gr\"obner basis by Lemma \ref{lem_correctness}.

{\bf Step 1, 2, 4} are trivial, so we next focus on showing the facts in {\bf Step 3}.

Take a minimal critical pair $[t_f(\u, f), t_g(\v, g)]$ in $Todo$. And this critical pair must appear in the algorithm GBGC. Suppose such pair is selected from the set $CPairs$ in  some  loop of the algorithm GBGC and $G_k$ denotes the set $G$ at the beginning of the same loop.  For such $[t_f(\u, f),$ $t_g(\v, g)]$, it must be in one of the following cases:
\begin{enumerate}

\item[C1:] $[t_f(\u, f), t_g(\v, g)]$ is {\em non-regular}.

\item[C2:] $[t_f(\u, f), t_g(\v, g)]$ is {\em super regular}.

\item[C3:] $[t_f(\u, f), t_g(\v, g)]$ is {\em regular} and is {\em not} gen-rewritable by $G_k$.

\item[C4:] $[t_f(\u, f), t_g(\v, g)]$ is {\em regular} and $t_f(\u, f)$ is gen-rewritable by $G_k$.

\item[C5:] $[t_f(\u, f), t_g(\v, g)]$ is {\em regular} and $t_g(\v, g)$ is gen-rewritable by $G_k$.

\end{enumerate}
Thus, to show the facts in {\bf Step 3}, it suffices to show (F1) holds in case {\bf C1},  and  (F1), (F2) hold in cases {\bf C2, C3, C4} and {\bf C5}. We will proceed for each case respectively.

 We  make the following claims under the condition that  $[t_f(\u, f), t_g(\v, g)]$ is  minimal in $Todo$. The proofs of these claims will be presented later.

{\bf Claim 1}: Given $(\bru, \brf)\in \M$, if $\lpp(\bru) \prec \lpp(t_f\u)$, then $(\bru, \brf)$ has a standard representation w.r.t. $G_{end}$.

{\bf Claim 2}: If $[t_f(\u, f), t_g(\v, g)]$ is super regular or regular and $t_f(\u, f)$ is gen-rewritable by $G_{end}$, then the S-polynomial of $[t_f(\u, f), t_g(\v, g)]$ has a standard representation w.r.t. $G_{end}$.

{\bf Claim 3:} If $[t_f(\u, f), t_g(\v, g)]$ is regular and $t_g(\v, g)$ is gen-rewritable by  $G_{end}$, then $t_f(\u, f)$ is also gen-rewritable by $G_{end}$.

Therefore, using {\bf Claim 2}, to show (F1) and (F2) hold in the cases {\bf C2, C3, C4} and {\bf C5}, it suffices to show $t_f(\u, f)$ is gen-rewritable by $G_{end}$ in each case.

{\bf C1:} $[t_f(\u, f), t_g(\v, g)]$ is {\em non-regular}. Consider the S-polynomial $(t_f\u-c t_g\v, t_f f-c t_g g)$ where $c=\lc(f)/\lc(g)$. Notice that $\lpp(t_f\u-c t_g\v) \prec \lpp(t_f\u)$ by the definition of non-regular, so {\bf Claim 1} shows $(t_f\u-c t_g\v, t_f f-c t_g g)$ has a standard representation w.r.t. $G_{end}$, which proves (F1).

{\bf C2:} $[t_f(\u, f), t_g(\v, g)]$ is {\em super regular}, i.e. $\lpp(t_f\u - ct_g\v) = \lpp(t_f\u)$ and $\lpp(t_f\u) = \lpp(t_g\v)$ where $c=\lc(f)/\lc(g)$. Let $\bar{c}=\lc(\u)/\lc(\v)$. Notice that $\bar{c}\not=c$, since $\lpp(t_f\u-c t_g \v)=\lpp(t_f\u)$. Then we have $\lpp(t_f \u-\bar{c} t_g \v)\prec \lpp(t_f\u)$ and $\lpp(t_f f-\bar{c} t_g g)=\lpp(t_f f)$. So {\bf Claim 1} shows $(t_f\u-\bar{c} t_g \v, t_f f-\bar{c} t_g g)$ has a standard representation w.r.t. $G_{end}$, and hence, there exists $(\w, h)\in G_{end}$ such that $\lpp(h)$ divides $\lpp(t_f f - \bar{c} t_g g)=\lpp(t_f f)$ and $\lpp(t_f \u)\succ \lpp(t_f \u-\bar{c} t_g \v) \succeq \lpp(t_h \w)$ where $t_h= \lpp(t_f f)/\lpp(h)$. Consider the critical pair of $(\u, f)$ and $(\w, h)$, say $[\bar{t}_f(\u, f), \bar{t}_h(\w, h)]$. Since $\lpp(h)$ divides $\lpp(t_f f)$, then $\bar{t}_f$ divides $t_f$, $\bar{t}_h$ divides $t_h$ and $\frac{\lpp(t_f)}{\lpp(\bar{t}_f)} = \frac{\lpp(t_h)}{\lpp(\bar{t}_h)}$. So $[\bar{t}_f(\u, f), \bar{t}_h(\w, h)]$ is regular and smaller than $[t_f(\u, f)$, $t_g(\v, g)]$ in fashion (a) or (b), which means $[\bar{t}_f(\u, f), \bar{t}_h(\w, h)]$ lies in $Done$ and $\bar{t}_f(\u, f)$ is gen-rewritable by $G_{end}$. Then ${t}_f(\u, f)$ is also gen-rewritable by $G_{end}$, since $\bar{t}_f$ divides $t_f$.

{\bf C3:} $[t_f(\u, f), t_g(\v, g)]$ is {\em regular} and {\em not} gen-rewritable by $G_k$. According to the algorithm GBGC, the S-polynomial $t_f(\u, f)-c t_g(\v, g)$ is reduced to $(\w, h)$ by $G_k$ where $c=\lc(f)/\lc(g)$, and $(\w, h)$ will be added to the set $G_k$ afterwards. Notice that $G_k\subset G_{end}$ and $(\w, h)\in G_{end}$. Since ``$<$" is an admissible partial order, we have $(\w, h)<(\u, f)$ by definition. Combined with the fact $\lpp(\w) = \lpp(t_f\u)$, so ${t}_f(\u, f)$ is gen-rewritable by $(\w, h)\in G_{end}$.

{\bf C4}: $[t_f(\u, f), t_g(\v, g)]$ is {\em regular} and $t_f(\u, f)$ is gen-rewritable by $G_k$. Then $t_f(\u, f)$ is also gen-rewritable by $G_{end}$, since $G_k\subset G_{end}$.

{\bf C5}: $[t_f(\u, f), t_g(\v, g)]$ is {\em regular} and $t_g(\v, g)$ is gen-rewritable by $G_k$.  $t_g(\v, g)$ is also gen-rewritable by $G_{end}$, since $G_k \subset G_{end}$. Then {\bf Claim 3} shows $t_f(\u, f)$ is gen-rewritable by $G_{end}$ as well.

After all, the theorem is proved.
\end{proof}

We give the proofs for the three claims below.

\begin{proof}[Proof of {\bf Claim 1}]
According to the hypothesis, we have $(\bru, \brf)\in \M$ and $\lpp(\bru)\prec \lpp(t_f\u)$. So for any critical pair $[t_{f'}(\u', f'), t_{g'}(\v', g')]$ of $G_{end}$ with $\lpp(\bru) \succeq \lpp(t_{f'} \u')$, we have $[t_{f'}(\u', f'), t_{g'}(\v', g')]$ is smaller than $[t_f(\u, f), t_g(\v, g)]$ in fashion (a) and hence lies in $Done$, which means the S-polynomial of $[t_{f'}(\u', f'), t_{g'}(\v', g')]$ has a standard representation w.r.t. $G_{end}$. So Lemma \ref{lem_stdrepresentation} shows that $(\bru, \brf)$ has a standard representation w.r.t. $G_{end}$.
\end{proof}

\begin{proof}[Proof of {\bf Claim 2}]
 We have that  $[t_f(\u, f), t_g(\v, g)]$ is minimal in $Todo$ and $t_f(\u, f)$ is gen-rewritable by $G_{end}$. Let $c = \lc(f)/\lc(g)$. Then $(\bru, \brf) = (t_f\u - c t_g \v, t_f f - c t_g g)$ is the S-polynomial of $[t_f(\u, f), t_g(\v, g)]$. Since $[t_f(\u, f), t_g(\v, g)]$ is super regular or regular, we have $\lpp(\bru) = \lpp(t_f \u)$. Next we will show that $(\bru, \brf)$ has a standard representation w.r.t. $G_{end}$. The proof is organized in the following way. \smallskip\\
{\bf First:} We show that there exists $(\u_0, f_0)\in G_{end}$ such that $t_f(\u, f)$ is gen-rewritable by $(\u_0, f_0)$ and $t_0(\u_0, f_0)$ is {\em not} gen-rewritable by $G_{end}$ where $t_0 = \lpp(t_f\u)/\lpp(\u_0)$.\smallskip\\
{\bf Second:} For such $(\u_0, f_0)$, we show that $\lpp(\brf) \succeq \lpp(t_0f_0)$ where $t_0 = \lpp(t_f\u)/\lpp(\u_0)$.\smallskip\\
{\bf Third:} We prove that $(\bru, \brf)$ has a standard representation w.r.t. $G_{end}$.\smallskip

Proof of the {\bf First} fact. By hypothesis, suppose $t_f(\u, f)$ is gen-rewritable by some $(\u_1, f_1)\in G_{end}$, i.e. $\lpp(\u_1)$ divides $\lpp(t_f\u)$ and $(\u_1, f_1) < (\u, f)$. Let $t_1 = \lpp(t_f \u) /\lpp(\u_1)$. If $t_1(\u_1, f_1)$ is not gen-rewritable by $G_{end}$, then $(\u_1, f_1)$ is the one we are looking for. Otherwise, there exists $(\u_2, f_2)\in G_{end}$ such that $t_1(\u_1, f_1)$ is gen-rewritable by $(\u_2, f_2)$. Notice that $t_f(\u, f)$ is also gen-rewritable by $(\u_2, f_2)$ and we have $(\u, f) > (\u_1, f_1) > (\u_2, f_2)$. Let $t_2 = \lpp(t_f \u) /\lpp(\u_2)$. We next discuss whether $t_2(\u_2, f_2)$ is gen-rewritable by $G_{end}$. In the better case, $(\u_2, f_2)$ is the needed one if $t_2(\u_2, f_2)$ is not gen-rewritable by $G_{end}$; while in the worse case, $t_2(\u_2, f_2)$ is gen-rewritable by some $(\u_3, f_3)\in G_{end}$. We can repeat the above discussions for the worse case. Finally, we will get a chain $(\u, f) > (\u_1, f_1) > (\u_2, f_2) > \cdots$. This chain must terminate, since $G_{end}$ is finite and ``>" is a partial order defined on $G_{end}$. Suppose $(\u_s, f_s)$ is the last one in the above chain. Then $t_f(\u, f)$ is gen-rewritable by $(\u_s, f_s)$ and $t_s(\u_s, f_s)$ is not gen-rewritable by $G_{end}$ where $t_s = \lpp(t_f\u)/\lpp(\u_s)$.

Proof of the {\bf Second} fact. From the {\bf First} fact, we have that $t_0(\u_0, f_0)$ is {\em not} gen-rewritable by $G_{end}$ where $t_0 = \lpp(t_f\u)/\lpp(\u_0)$. Next, we prove the {\bf Second} fact by contradiction. Assume $\lpp(\brf) \prec \lpp(t_0f_0)$. Let $c_0 = \lc(\bru)/\lc(\u_0)$. Then we have $\lpp(\bru - c_0 t_0 \u_0) \prec \lpp(\bru) = \lpp(t_0\u_0)$ and $\lpp(\brf - c_0 t_0 f_0) = \lpp(t_0 f_0)$. So $(\bru - c_0 t_0 \u_0, \brf - c_0 t_0 f_0)$ has a standard representation w.r.t. $G_{end}$ by {\bf Claim 1}, and hence, there exists $(\w, h)\in G_{end}$ such that $\lpp(h)$ divides $\lpp(\brf - c_0 t_0 f_0) = \lpp(t_0 f_0)$ and $\lpp(t_0\u_0) \succ \lpp(\bru - c_0 t_0 \u_0) \succeq \lpp(t_h\w)$ where $t_h=\lpp(t_0f_0)/\lpp(h)$. Next consider the critical pair of $(\u_0, f_0)$ and $(\w, h)$, say $[\bar{t}_0(\u_0, f_0)$, $\bar{t}_h(\w, h)]$. Since $\lpp(h)$ divides $\lpp(t_0f_0)$, then $\bar{t}_0$ divides $t_0$, $\bar{t}_h$ divides $t_h$ and $\frac{\lpp(t_0)}{\lpp(\bar{t}_0)} = \frac{\lpp(t_h)}{\lpp(\bar{t}_h)}$. So $[\bar{t}_0(\u_0, f_0)$, $\bar{t}_h(\w, h)]$ is regular and smaller than $[t_f(\u, f), t_g(\v, g)]$ in fashion (a) or (b), which means $[\bar{t}_0(\u_0, f_0)$, $\bar{t}_h(\w, h)]$ lies in $Done$ and $\bar{t}_0(\u_0, f_0)$ is gen-rewritable by $G_{end}$. Moreover, since $\bar{t}_0$ divides $t_0$,  $t_0(\u_0, f_0)$ is also gen-rewritable by $G_{end}$, which contradicts with the property that $t_0(\u_0, f_0)$ is {\em not} gen-rewritable by $G_{end}$. The {\bf Second} fact is proved.

Proof of the {\bf Third} fact. According to the second fact, we have $\lpp(\brf) \succeq \lpp(t_0f_0)$ where $t_0 = \lpp(t_f\u)/\lpp(\u_0)$. Let $c_0 = \lc(\bru)/\lc(\u_0)$. We have $\lpp(\bru - c_0 t_0 \u_0)\prec \lpp(\bru)$ and $\lpp(\brf - c_0 t_0 f _0)\preceq \lpp(\brf)$. So $(\bru, \brf) - c_0 t_0 (\u_0, f_0) = (\bru - c_0 t_0 \u_0, \brf - c_0 t_0 f _0)$ has a standard representation w.r.t. $G_{end}$ by {\bf Claim 1}. Notice that $\lpp(\bru)=\lpp(t_0 \u_0)$ and $\lpp(\brf)\succeq \lpp(t_0 f _0)$. So after adding $c_0 t_0 (\u_0, f_0)$ to both sides of the standard representation of $(\bru, \brf) - c_0 t_0 (\u_0, f_0)$, then we will get a standard representation of $(\bru, \brf)$ w.r.t. $G_{end}$.

{\bf Claim 2} is proved.
\end{proof}

\begin{proof}[Proof of  {\bf Claim 3}]
Since $t_g(\v, g)$ is gen-rewritable by $G_{end}$ and $\lpp(t_g\v) \prec \lpp(t_f\u)$, by using a similar method in the proof of {\bf Claim 2}, we can first show that there exists $(\v_0, g_0)\in G_{end}$ such that $t_g(\v, g)$ is gen-rewritable by $(\v_0, g_0)$ and $t_0(\v_0, g_0)$ is not gen-rewritable by $G_{end}$ where $t_0 = \lpp(t_g\v)/\lpp(\v_0)$. And then we can also prove that $\lpp(t_g g) \succeq \lpp(t_0 g_0)$ by contradiction.

If $\lpp(t_g g) = \lpp(t_0 g_0)$, then the critical pair of $(\u, f)$ and $(\v_0, g_0)$, say $[\bar{t}_f(\u, f), \bar{t}_0(\v_0, g_0)]$, must be regular and smaller than the critical pair $[t_f(\u, f), t_g(\v, g)]$ in fashion (a) or (d), which means $[\bar{t}_f(\u, f), \bar{t}_0(\v_0, g_0)]$ lies in $Done$ and $\bar{t}_f(\u, f)$ is gen-rewritable by $G_{end}$. Since $\lpp(t_0 g_0) = \lpp(t_g g) = \lpp(t_f f)$, then $\bar{t}_f$ divides $t_f$, and hence, $t_f(\u, f)$ is gen-rewritable by $G_{end}$ as well.

Otherwise, $\lpp(t_g g) \succ \lpp(t_0 g_0)$ holds. Let $c = \lc(\v)/\lc(\v_0)$, we have $\lpp(t_g \v - c t_0 \v_0) \prec \lpp(t_g \v)$ and $\lpp(t_g g - c t_0 g_0) = \lpp(t_g g)$. Then $(t_g \v - c t_0 \v_0, t_g g - c t_0 g_0)$ has a standard representation w.r.t. $G_{end}$ by {\bf Claim 1}, and hence, there exists $(\w, h)\in G_{end}$ such that $\lpp(h)$ divides $\lpp(t_g g - c t_0 g_0)=\lpp(t_g g)$ and $\lpp(t_h \w) \preceq \lpp(t_g \v - c t_0 \v_0) \prec \lpp(t_g \v)$ where $t_h = \lpp(t_g g)/\lpp(h)$. Then the critical pair of $(\u, f)$ and $(\w, h)$, say $[\bar{t}_f(\u, f), \bar{t}_h(\w, h)]$, must be regular and smaller than the critical pair $[t_f(\u, f), t_g(\v, g)]$ in fashion (a) or (c), which means $[\bar{t}_f(\u, f), \bar{t}_h(\w, h)]$ lies in $Done$ and $\bar{t}_f(\u, f)$ is gen-rewritable by $G_{end}$. Since $\lpp(h)$ divides $\lpp(t_g g) = \lpp(t_f f)$, then $\bar{t}_f$ divides $t_f$, and hence, $t_f(\u, f)$ is gen-rewritable by $G_{end}$ as well.

{\bf Claim 3} is proved.
\end{proof}

\section{New Criteria and comparisons}

Based on the generalized criterion, to develop new criteria for signature related algorithms, it suffices to choose appropriate admissible partial orders. For example, we can develop a new criterion by using the following admissible partial order implied by GVW's criteria: that is, $(\u', f')<(\u, f)$, where $(\u, f), (\u', f')\in G$, if one of the following two conditions holds.
\begin{enumerate}

\item $\lpp(t'f') < \lpp(tf)$ where $t'= \frac{\lcm(\lpp(\u), \lpp(\u'))}{\lpp(\u')}$ and $t= \frac{\lcm(\lpp(\u), \lpp(\u'))}{\lpp(\u)}$ such that $\lpp(t'\u') = \lpp(t\u)$.

\item $\lpp(t'f') = \lpp(tf)$ and  $(\u', f')$ is added to $G$ later than $(\u, f)$.

\end{enumerate}

We propose a new algorithm (named by NEW) based on the above criterion. This new algorithm  can be considered as an improved version of GVW. We have implemented F5, GVW and NEW on Singular (version 3-1-2) with the same structure, and no special optimizations (including matrical reduction) is used such that the timing is only affected by the effect of criteria. The timings were obtained  on Core i5 $4\times 2.8$ GHz with 4GB memory running Windows 7.

Another purpose of the comparison is to see the influences of computing orders of critical pairs. So we use two strategies for selecting critical pairs.  \smallskip\\
Minimal {\bf S}ignature Strategy: $[t_f(\u, f), t_g(\v, g)]$ is selected from $CPairs$ if there does {\em not} exist $[t_{f'}(\u', f'), t_{g'}(\v', g')]\in CPairs$ such that $\lpp(t_{f'} \u') \prec \lpp(t_f\u)$;\smallskip\\
Minimal {\bf D}egree Strategy: $[t_f(\u, f), t_g(\v, g)]$ is selected from $CPairs$ if there does {\em not} exist $[t_{f'}(\u', f'), t_{g'}(\v', g')]\in CPairs$ such that $\deg(\lpp(t_{f'} f')) \prec \deg(\lpp(t_f f))$.\smallskip \\
The proof in last section ensures the algorithms, including GVW,  are correct  using any of the above strategies.

In the following table, we use (s) and (d) to refer the two strategies respectively. The order $\prec_1$ is graded reverse lex order and $\prec_2$ is extended from $\prec_1$ in the following way: $x^\alpha\e_i \prec_2 x^\beta\e_j$, if either $\lpp(x^\alpha f_i) \prec_1 \lpp(x^\beta f_j)$, or  $\lpp(x^\alpha f_i) = \lpp(x^\beta f_j)$ and $i > j$. This order $\prec_2$ has also been used in \citep{Gao10b, SunWang10b}. The examples are selected from \citep{Gao10b}.

\begin{table}[!ht] \label{data}
\centering \caption{$\#All$: number of all critical pairs generated in the computation; $\#red$: number of critical pairs that are really reduced in the computation; $\#gen$: number of generators in the \gr basis in the last iteration but before computing a reduced \gr basis.}
%\tiny
\scriptsize
\begin{tabular}{|c|c|c|c|c|c|c|} \hline
 & F5(s) & gvw(s) & new(s) &  F5(d) & gvw(d) & new(d) \\ \hline\hline

 \multicolumn{7}{|c|}{Katsura5 (22 generators in reduced \gr basis)} \\ \hline

$\#all$ & 351 & 351& 351 & 378 & 351 & 378 \\ \hline

$\#red.$ & 39 & 39 & 39 & 40 & 39 & 40 \\ \hline

$\#gen.$ & 27 & 27 & 27 & 28 & 27 & 28 \\ \hline

time & 1.730 & 1.425 & 1.400 & 1.530 & 1.230 & 1.195 \\ \hline\hline

 \multicolumn{7}{|c|}{Katsura6 (41)} \\ \hline

$\#all$ & 1035 & 1035 & 1035 & 1225 & 1225 & 1275 \\ \hline

$\#red.$ & 73 & 73 & 73 & 77 & 77 & 78 \\ \hline

$\#gen.$ & 46 & 46 & 46 & 50 & 50 & 51\\ \hline

time & 10.040 & 8.715 & 7.865 & 7.520 & 6.920 & 5.650 \\ \hline\hline

 \multicolumn{7}{|c|}{Katsura7 (74)} \\ \hline

$\#all$ & 3240 & 3160 & 3160 & 3240 & 3240 & 3160 \\ \hline

$\#red.$ & 122 & 120 & 121 & 122 & 121 & 121 \\ \hline

$\#gen.$ & 81 & 80 & 80 & 81 & 81 & 80 \\ \hline

time & 47.840 & 70.371 & 38.750 & 39.440 & 74.535 & 29.950 \\ \hline\hline

 \multicolumn{7}{|c|}{Katsura8 (143)} \\ \hline

$\#all$ & 12880 & 11325 & 11325 & 12880 & 11476 & 11325 \\ \hline

$\#red.$ & 252 & 242 & 244 & 252 & 243 & 244 \\ \hline

$\#gen.$ & 161 & 151 & 151 & 161 & 152 & 151 \\ \hline

time & 426.402 & 2013.28 & 395.844 & 329.390 & 2349.16 & 310.908 \\ \hline\hline

 \multicolumn{7}{|c|}{Cyclic5 (20)} \\ \hline

$\#all$ & 1128 & 1128 & 1128 & 2211 & 1953 & 2080\\ \hline

$\#red.$ & 56 & 56 & 56 & 80 & 76 & 78 \\ \hline

$\#gen.$ & 48 & 48 & 48 & 67 & 63 & 65 \\ \hline

time & 3.074 & 2.953 & 2.708 & 2.864 & 2.654 & 2.630 \\ \hline\hline

 \multicolumn{7}{|c|}{Cyclic6 (45)} \\ \hline

$\#all$ & 19110 & 18528 & 18528 & 293761 & 81406 & 299925 \\ \hline

$\#red.$ & 234 & 231 & 231 & 821 & 463 & 834 \\ \hline

$\#gen.$ & 196 & 193 & 193 & 767 & 404 & 775 \\ \hline

time & 111.095 & 106.736 & 87.899 & 787.288 & 121.768 & 593.947 \\ \hline

\hline\end{tabular}
\end{table}

%\subsection{Analysis on the experimental results}
From the above table, we can see that the new algorithm   usually has better performance than the others. There are probably two main reasons. First, the new algorithm and GVW reject the same kind of critical pairs, but GVW's second criterion need to do some extra reductions before rejecting redundant critical pairs. Second, the critical pairs rejected by the new algorithm generally have larger leading power products than those rejected by F5 such that reductions in the new algorithm cost less time.

From the above table, we find that for some examples the algorithm with minimal
signature strategy has better performance. The possible reason is that less critical pairs are generated by this strategy.
For other examples,  the algorithm with minimal degree strategy cost less time.
%, particularly for 0-dimensional systems.
The possible reason is that, although  the algorithm with the minimal degree strategy usually generates more critical pairs, the  critical pairs which are really needed to be reduced    usually have lower degrees.

\section{Conclusions and Future works}

A generalized criterion for signature related algorithms is proposed in this paper. We show in detail that this generalized criterion can specialize to the criteria of F5 and GVW by using appropriate admissible orders. Moreover, we also proved that if the partial order is admissible, the generalized criterion is always correct no matter which computing order of the critical pairs is used in the algorithm. Since the generalized criterion can specialize to the criteria of F5 and GVW, the proof in this paper also ensures the correctness of F5 and GVW  for any computing order of critical pairs.

The significance of this generalized criterion is to describe what kind of criterion is correct in signature related algorithms. The generalized criterion also provides a general approach to check and develop new criteria for signature related algorithms, i.e., if a new criterion can be specialized from the generalized criterion by using an admissible partial order, it must be correct; when developing new criteria, it suffices to choose admissible partial orders in the generalized criterion. We also develop a new criterion in this paper. We claim that if the admissible partial order is in fact a total order, then the generalized criterion can reject almost all useless critical pairs. The proof of the claim will be included in  future works.

On the algorithm GBGC, there are several open problems.

\noindent
{\bf Problem 1:} Is the generalized criterion still correct if the partial order is not admissible? We do know some partial order will lead to wrong criterion. For example, consider the following partial order which is not admissible: we say $(\u', f')<(\u, f)$, where $(\u, f), (\u', f')\in G$, if $f'=0$ and $f\not=0$; otherwise, $(\u', f')$ is added to $G$ {\em earlier} than $(\u, f)$. The above partial order leads to a wrong criterion. The reason is that $(\e_1, f_1), \cdots, (\e_m, f_m)$ are added to $G$ earlier than others, so using this partial order, the generalized criterion will reject almost all critical pairs generated later, which definitely leads to a wrong output unless $\{(\e_1, f_1), \cdots, (\e_m, f_m)\}$ itself is an S-\gr basis. Perhaps some partial orders lead to correct criteria, and this will be studied in the future.

\noindent
{\bf Problem 2:} Does the algorithm GBGC always terminate in finite steps?

\section{Acknowledgement}

We would like to thank Shuhong Gao and Mingsheng Wang for constructive discussions.

\end{document}